\newcommand{\beql}[1]{\begin{equation}\label{#1}}
\newcommand{\eeq}{\end{equation}}
\newcommand{\Mean}[1]{{\mathbb E}\left[{#1}\right]}
\newcommand{\Var}[1]{{\mathbb Var}\left[{#1}\right]}
\newcommand{\Cov}[1]{ {\mathbb Cov}\left[{#1}\right]}
\newcommand{\Prob}[1]{{{\bf{Pr}}\left[{#1}\right]}}
\newtheorem{lemma}{Lemma}
\newtheorem{theorem}{Theorem}
\newcommand{\hide}[1]{}
\title{Colorful Triangle Counting \\ and a \textsc{MapReduce} Implementation}
\author{Rasmus Pagh}
\address{IT University of Copenhagen\\
Rued Langgaards Vej 7, DK-2300 K{\o}benhavn S \\
Denmark} \email{pagh@itu.dk}
\author{Charalampos E. Tsourakakis}
\address{Department of Mathematical Sciences\\
Carnegie Mellon University\\
5000 Forbes Av., 15213\\
Pittsburgh, PA \\
U.S.A} \email{ctsourak@math.cmu.edu}
\keywords{graph algorithms, randomized algorithms, concentration of measure, parallel algorithms}
\begin{document}
\maketitle

\begin{abstract}

In this note we introduce a new randomized algorithm for counting triangles in graphs.  
We show that under mild conditions, the estimate of our algorithm is 
strongly concentrated around  the true number of triangles. 
Specifically,  if $p \geq \max{ ( \frac{\Delta \log{n}}{t}, \frac{\log{n}}{\sqrt{t}})}$, 
where $n$, $t$, $\Delta$ denote the number of vertices in $G$, the number of triangles in $G$,
the maximum number of triangles an edge of $G$ is contained, 
then for any constant $\epsilon>0$ our unbiased estimate $T$ is concentrated around  its expectation, i.e.,
$ \Prob{ |T - \Mean{T}| \geq \epsilon \Mean{T} }  = o(1)$.
Finally, we present a \textsc{MapReduce} implementation of our algorithm. 
\end{abstract}

\section{Introduction}

Triangle counting is a fundamental algorithmic problem with many applications. 
The interested reader is urged to see \cite{tsourakakis} and references therein. 
The fastest exact triangle counting algorithm to date (in terms of number of edges in the graph) is due to Alon, Yuster and Zwick \cite{alon} and runs in $O(m^{\frac{2\omega}{\omega+1}})$, where currently the matrix multiplication exponent $\omega$ is 2.371 \cite{CopperWino}.
For planar graphs linear time algorithms are known, e.g., \cite{pap:yan}.
Practical methods for exact triangle counting use instead enumeration techniques, see e.g., \cite{wagner} and references therein. 
For many applications, especially in the context of large social networks, an exact count is not crucial but rather a fast, high quality estimate. 
Most of the work on approximate triangle counting is sampling-based and has considered a (semi-)streaming setting \cite{yosseff,gionis:spam,buriol,jowhari,doulion}.
A different line of research is based on a linear algebraic approach  \cite{avron,icdm08}.
Currently to the best of our knowledge, the state-of-the-art approximate counting method relies on a hybrid 
algorithm that first sparsifies the graph and then samples 
triples according to a degree based partitioning trick \cite{tsourakakis}.

In this short note, we present a new sampling approach to approximating the number of triangles in a graph  $G(V,E)$, 
which significantly improves existing sampling approached. Furthermore, it is easily implemented in parallel. 
The key idea of our algorithm is to correlate the sampling of edges such that if two edges of a triangle are sampled, the third edge is always sampled. This decreases 
the degree of the multivariate polynomial that expresses the number of sampled triangles.
We analyze our method using a powerful theorem due to Hajnal and Szemer\'{e}di \cite{HajnalSzemeredi}. 
This note is organized as follows: in Section~\ref{prelim} we discuss the theoretical preliminaries
for our analysis and in Section~\ref{algo} we present our randomized algorithm. In Section~\ref{analysis}
we present our main theoretical results,  we analyze our algorithm and we
discuss some of its important properties. In Section~\ref{mapreduce} we present an implementation
of our algorithm in the popular \textsc{MapReduce} framework. Finally, in Section~\ref{concl} 
we conclude with future research directions.

\subsection{Algorithm} 
\label{algo} 

\begin{algorithm}[t]
\caption{Colorful Triangle Sampling} 
\begin{algorithmic}
\REQUIRE Unweighted graph $G([n],E)$
\REQUIRE Number of colors  $N=1/p$
\STATE Let $f: V \rightarrow [N]$ have uniformly random values
\STATE $E' \leftarrow \{ \{u,v\}\in E \; | \; f(u)=f(v) \}$
\STATE $T \leftarrow $ number of triangles in the graph $(V,E')$
\STATE {\bf return} $T/p^2$ 
\end{algorithmic}
\end{algorithm}

Our algorithm, summarized as Algorithm 1, samples each edge with probability~$p$, where $N=1/p$ is integer, as follows.
Let  $f: [n] \rightarrow [N]$ be a random coloring of the vertices of $G([n],E)$, such that for all $v \in [n]$ and $i \in [N]$, $\Prob{ f(v) = i } = p$.
We call an {\em edge monochromatic} if both its endpoints have the same color. 
Our algorithm samples exactly the set $E'$ of monochromatic edges, counts the number $T$ of triangles in $([n],E')$ (using any exact or approximate triangle counting algorithm), and multiplies this count by $p^{-2}$.

Previous work~\cite{tsourakakis_submitted,doulion} has used a related sampling idea, the difference being that edges were sampled {\em independently} with probability $p$.
Some intuition why this sampling procedure is less efficient than what we propose can be obtained by considering the case where a graph 
has $t$ edge-disjoint triangles. With independent edge sampling there will be no triangles left (with probability $1-o(1)$) if $p^3 t=o(1)$. 
Using our colorful sampling idea there will be $\omega(1)$ triangles in the sample with probability $1-o(1)$ as long as $p^2 t = \omega(1)$. 
This means that we can choose a smaller sample, and still get a accurate estimates from it.

\section{Theoretical Preliminaries}
\label{prelim}

In Section~\ref{conc} we make extensive use of the following version of the Chernoff bound \cite{chernoff}.

\begin{lemma}[Chernoff Inequality]
Let $X_1, X_2, \ldots, X_k$ be independently distributed $\{0,1\}$ variables with $E[X_i]=p$. Then for any $\epsilon > 0$, we have
$$\Prob{ |\frac{1}{k} \sum_{i=1}^k X_i - p| > \epsilon p } \leq 2 e^{-\epsilon^2pk/2}$$
\label{lem:chernoff}
\end{lemma}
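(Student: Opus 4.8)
The plan is to prove the bound by the standard exponential-moment (Chernoff) method, treating the two tails separately and combining them with a union bound that supplies the factor of $2$. Write $S = \sum_{i=1}^k X_i$ and $\mu = \Mean{S} = pk$, so that the event in question is $\Set{\Abs{S-\mu} > \epsilon\mu}$. First I would split this as $\Prob{\Abs{S-\mu} > \epsilon\mu} \le \Prob{S > (1+\epsilon)\mu} + \Prob{S < (1-\epsilon)\mu}$, reducing the claim to a pair of one-sided estimates, each of which I aim to control by a term of the form $e^{-\epsilon^2 pk/2}$.

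For the upper tail the key step is the exponential Markov inequality: for every $\lambda > 0$, $\Prob{S \ge (1+\epsilon)\mu} = \Prob{e^{\lambda S} \ge e^{\lambda(1+\epsilon)\mu}} \le e^{-\lambda(1+\epsilon)\mu}\,\Mean{e^{\lambda S}}$. Independence of the $X_i$ then factors the moment generating function as $\Mean{e^{\lambda S}} = \prod_{i=1}^k \Mean{e^{\lambda X_i}} = (1 - p + p e^{\lambda})^k$, and the elementary inequality $1 + x \le e^x$ gives $\Mean{e^{\lambda S}} \le \exp\!\big(pk(e^{\lambda}-1)\big)$. Minimizing the resulting bound over $\lambda$ (the optimizer being $\lambda = \log(1+\epsilon)$) yields $\Prob{S \ge (1+\epsilon)\mu} \le \big(e^{\epsilon}(1+\epsilon)^{-(1+\epsilon)}\big)^{pk}$. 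The lower tail is handled symmetrically with $\lambda < 0$, the optimal choice being $\lambda = \log(1-\epsilon)$, giving $\Prob{S \le (1-\epsilon)\mu} \le \big(e^{-\epsilon}(1-\epsilon)^{-(1-\epsilon)}\big)^{pk}$.

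The final step is purely analytic: I would reduce each base quantity to the clean form $e^{-\epsilon^2/2}$ by studying $g_{+}(\epsilon) = \epsilon - (1+\epsilon)\log(1+\epsilon)$ and $g_{-}(\epsilon) = -\epsilon - (1-\epsilon)\log(1-\epsilon)$. A Taylor expansion shows $g_{-}(\epsilon) \le -\epsilon^2/2$, which immediately delivers the lower-tail bound $e^{-\epsilon^2 pk/2}$; substituting both tail bounds into the union bound then produces the factor $2$ and the claimed inequality.

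I expect the main obstacle to be exactly this last analytic estimate, and specifically the upper tail. While $g_{-}(\epsilon) \le -\epsilon^2/2$ holds with room to spare, the comparison $g_{+}(\epsilon) \le -\epsilon^2/2$ is the delicate one: the Taylor series $g_{+}(\epsilon) = -\epsilon^2/2 + \epsilon^3/6 - \cdots$ has a cubic correction of the ``wrong'' sign, pushing $g_{+}$ above $-\epsilon^2/2$ for small $\epsilon$ (and still more for large $\epsilon$). Fitting the upper tail under $e^{-\epsilon^2 pk/2}$ therefore relies on the factor-$2$ slack from the union bound together with a restriction of the deviation to a regime such as $0 < \epsilon \le 1$; pinning down the exact range of $\epsilon$ for which the stated constant $1/2$ is valid is the one point requiring genuine care, the remainder being the routine Chernoff computation.
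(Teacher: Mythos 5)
Your overall route---two-sided split, exponential Markov inequality, factorization of the moment generating function, optimization at $\lambda=\log(1\pm\epsilon)$---is the standard Chernoff argument, and the computations through the exponents $g_{\pm}$ are correct. (For the record, the paper offers no proof of this lemma at all; it is quoted with a citation, so you are being measured against the statement itself.) The lower tail does close as you say: $g_-(\epsilon)\le -\epsilon^2/2$ on $(0,1)$ since $\frac{d}{d\epsilon}\left(g_-(\epsilon)+\epsilon^2/2\right)=\log\frac{1}{1-\epsilon}\cdot(-1)+\epsilon+ \log(1-\epsilon)\cdot 0 \le 0$ --- more simply, $g_-(\epsilon)+\epsilon^2/2$ vanishes at $0$ and has derivative $\log(1-\epsilon)+\epsilon\le 0$ --- and for $\epsilon\ge 1$ the lower-tail event $S<(1-\epsilon)pk\le 0$ is empty.

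The genuine gap is exactly where you suspected, but your proposed rescue does not work. Writing $\phi(\epsilon)=g_+(\epsilon)+\epsilon^2/2$, one has $\phi(0)=0$ and $\phi'(\epsilon)=\epsilon-\log(1+\epsilon)>0$, so $g_+(\epsilon)>-\epsilon^2/2$ \emph{strictly for every} $\epsilon>0$: restricting to $0<\epsilon\le 1$ does not help, and the factor $2$ from the union bound is a constant that cannot absorb the multiplicative discrepancy $e^{\phi(\epsilon)pk}$, which tends to infinity as $k\to\infty$. Worse, this is not merely a limitation of the MGF method: in the regime $p\to 0$, $pk\to\infty$ with $\epsilon$ fixed, the sum is asymptotically Poisson and the Chernoff exponent is tight, i.e.\ $\Prob{S\ge(1+\epsilon)pk}=e^{(1+o(1))\,g_+(\epsilon)pk}$; at $\epsilon=1$, for instance, the true rate is $2\log 2-1\approx 0.386<1/2$, so the lemma as stated (constant $1/2$, two-sided, all $\epsilon>0$) is in fact \emph{false} for the upper tail, and no completion of your argument---or any other---can establish it. What your computation does deliver is the standard corrected form: from $\log(1+\epsilon)\ge \frac{2\epsilon}{2+\epsilon}$ one gets $g_+(\epsilon)\le-\frac{\epsilon^2}{2+\epsilon}\le-\frac{\epsilon^2}{3}$ for $0<\epsilon\le 1$, hence $\Prob{\Abs{\frac{1}{k}\sum_{i=1}^k X_i-p}>\epsilon p}\le 2e^{-\epsilon^2 pk/3}$. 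That is the statement you should prove; the weakening from $1/2$ to $1/3$ costs only a constant, and the paper's sole use of the lemma (in the proof of Theorem~\ref{thrm:concentration}, where the requirement $p^2k\epsilon^2\ge 4d'\log n$ would simply change its constant) survives unchanged.
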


Hajnal and Szemer\'{e}di \cite{HajnalSzemeredi} proved in 1970 the following conjecture of Paul Erd\"{o}s:

\begin{theorem}[Hajnal-Szemer\'{e}di Theorem]
\label{lem:hajnal}
Every graph with $n$ vertices and maximum vertex degree at most $k$
is $k+1$ colorable with all color classes of size at least $n/k$.
\end{theorem}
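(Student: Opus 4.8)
The plan is to prove the stronger \emph{equitable} form of the statement: $G$ has a proper $(k+1)$-coloring in which every color class has size $\lfloor n/(k+1)\rfloor$ or $\lceil n/(k+1)\rceil$, so that the classes are as balanced as possible (this is the sense in which Theorem~\ref{lem:hajnal} is used). After padding $G$ with isolated vertices I may assume $n=(k+1)s$ and aim for $k+1$ classes of size exactly $s$; the padding is deleted at the end without affecting the degree bound. I would induct on the number of edges. The edgeless graph is trivially equitably colorable, and for the inductive step I delete an edge $xy$, take an equitable $(k+1)$-coloring of $G-xy$, and reinstate $xy$. If $x$ and $y$ already carry distinct colors we are done, so assume they share one. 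A standard reduction (uncolor one endpoint and reassign it to a different class) then brings us to the following rebalancing problem, which carries essentially all of the difficulty.

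\emph{Rebalancing problem.} Given a graph with $\Delta\le k$ and a proper $(k+1)$-coloring that is \emph{nearly equitable} --- one class $W$ of size $s+1$, one class $Z$ of size $s-1$, and all others of size $s$ --- produce an equitable $(k+1)$-coloring.

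The mechanism I would use is a chain of single-vertex moves. Call a vertex $u$ lying in class $V_i$ \emph{movable to} $V_j$ if $u$ has no neighbor in $V_j$, so that relocating $u$ from $V_i$ to $V_j$ keeps the coloring proper. Record these moves in an auxiliary digraph $D$ on the vertex set $\{V_1,\dots,V_{k+1}\}$, with an arc $V_i\to V_j$ whenever some vertex of $V_i$ is movable to $V_j$. If $Z$ is reachable from $W$ in $D$, fix a directed path $W=U_0\to U_1\to\cdots\to U_\ell=Z$ and execute the moves along it, relocating a witness vertex out of each class into the next; this shifts one unit of excess from $W$ into $Z$ while every intermediate class both gains and loses exactly one vertex, so its size is unchanged. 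The result is an equitable coloring, closing the induction.

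The hard part is exactly the case where no such directed path exists. Let $\mathcal{R}$ be the set of classes reachable from $W$ in $D$; by assumption $Z\notin\mathcal{R}$. Reachability is closed under arcs, so \emph{no} vertex in any class of $\mathcal{R}$ is movable to any class outside $\mathcal{R}$, which forces every such vertex to have at least one neighbor in each class of $\mathcal{R}^c$. One then hopes to double count the edges between $\bigcup\mathcal{R}$ (the union of the classes in $\mathcal{R}$) and its complement against the bound $\Delta\le k$ to reach a contradiction. The obstacle --- and the reason this theorem is genuinely deep rather than a one-line exercise --- is that the naive count is not tight enough to close: one must refine the notion of movability (tracking \emph{terminal} vertices, performing a secondary round of moves \emph{inside} $\mathcal{R}$, and controlling the whole process with a potential/weight function) before the extremal estimate yields the contradiction. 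This delicate bookkeeping is the substance of the original Hajnal--Szemer\'{e}di argument, and of the later streamlined proof of Kierstead and Kostochka; since it is long and technical while our application needs only the statement, I would invoke it as a black box, which is exactly what we do.
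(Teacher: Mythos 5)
Since the paper offers no proof of this statement --- it is imported verbatim from Hajnal and Szemer\'{e}di \cite{HajnalSzemeredi} --- your proposal, which sketches the standard edge-induction and movability-digraph reduction and then invokes the original (or the streamlined Kierstead--Kostochka) argument as a black box for the hard case, amounts to essentially the same treatment as the paper's, and the reduction you do spell out is sound. As a side benefit, your equitable formulation (all classes of size $\lfloor n/(k+1)\rfloor$ or $\lceil n/(k+1)\rceil$) is the correct statement: the paper's ``all color classes of size at least $n/k$'' cannot hold literally, since $k+1$ such classes would contain $(k+1)n/k > n$ vertices, and the class size actually used in the proof of Theorem~\ref{thrm:concentration} is of order $n/(k+1)$.
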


\section{Analysis} 
\label{analysis}

We wish to pick $p$ as small as possible but at the same time have a strong concentration of the estimate around its expected value. How
small can $p$ be? In Section~\ref{secondmoment} we present a second moment argument which gives a sufficient condition for picking $p$. 
Our main theoretical result, stated as Theorem~\ref{thrm:concentration} in Section~\ref{conc}, provides a sufficient condition to this question. 
In Section~\ref{running} we analyze the complexity of our method. 

\subsection{Second Moment Method} 
\label{secondmoment} 

Using the Second Moment Method we are able to obtain the following strong theoretical guarantee: 

\begin{theorem} 
Let $n$, $t$, $\Delta$, $T$ denote the number of vertices in $G$, the number of triangles in $G$,
the maximum number of triangles an edge of $G$ is contained and the number of monochromatic triangles
in the randomly colored graph respectively. Also let $N=\frac{1}{p}$ the number of colors used.
If $p \geq \max{ ( \frac{\Delta \log{n} }{t}, \frac{\log{n}}{\sqrt{t}})}$, then $T \sim \Mean{T}$ 
with probability 1-$o(1)$. 
\label{thrm:secondmoment} 
\end{theorem}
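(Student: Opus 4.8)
The plan is to prove concentration by the second moment method, so the entire task reduces to a variance estimate. Write $T = \sum_{\tau} X_\tau$, where $\tau$ ranges over the $t$ triangles of $G$ and $X_\tau$ is the indicator that $\tau$ is \emph{monochromatic}, i.e.\ all three of its vertices receive the same color under $f$. A triangle survives into $(V,E')$ exactly when all three of its edges are monochromatic, which happens if and only if its three vertices share one color; hence $T$ genuinely counts the triangles of $(V,E')$. Since there are $N=1/p$ colors and each vertex independently takes a given color with probability $p$, we have $\Mean{X_\tau} = N p^3 = p^2$, so $\Mean{T} = t p^2$ (and the returned estimate $T/p^2$ is unbiased). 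It therefore suffices to show $\Var{T} = o(\Mean{T}^2) = o(t^2 p^4)$; Chebyshev's inequality then yields $\Prob{ |T - \Mean{T}| \geq \epsilon \Mean{T}} \leq \Var{T}/(\epsilon^2 \Mean{T}^2) = o(1)$ for every fixed $\epsilon>0$, which is exactly $T \sim \Mean{T}$ with probability $1-o(1)$.

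Next I would expand $\Var{T} = \sum_{\tau,\tau'} \Cov{X_\tau, X_{\tau'}}$ and classify the ordered pairs $(\tau,\tau')$ by the number of vertices they share. Pairs sharing no vertex are independent and contribute $0$. The diagonal terms contribute $\sum_\tau \Var{X_\tau} = t(p^2 - p^4) \le t p^2$. The key observation is that a pair sharing \emph{exactly one} vertex also contributes zero covariance: if $\tau$ and $\tau'$ share only a vertex $a$, then $X_\tau X_{\tau'} = 1$ forces all five distinct vertices of $\tau \cup \tau'$ to take the common color of $a$, so $\Mean{X_\tau X_{\tau'}} = N p^5 = p^4 = \Mean{X_\tau}\Mean{X_{\tau'}}$. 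This cancellation is precisely what the correlated ``colorful'' sampling buys over independent edge sampling. The only surviving off-diagonal contribution comes from pairs sharing an edge (equivalently two vertices): there $X_\tau X_{\tau'}=1$ forces four vertices to agree, so $\Mean{X_\tau X_{\tau'}} = N p^4 = p^3$ and the covariance is $p^3 - p^4 \le p^3$.

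To finish I would count the edge-sharing pairs. Writing $t_e$ for the number of triangles containing edge $e$, two distinct triangles share at most one edge, so the number of ordered edge-sharing pairs is $\sum_e t_e(t_e-1) \le \Delta \sum_e t_e = 3\Delta t$, using $t_e \le \Delta$ and the identity $\sum_e t_e = 3t$. Combining the three cases gives
\[
\Var{T} \;\le\; t p^2 + 3\Delta t\, p^3 ,
\]
and dividing by $\Mean{T}^2 = t^2 p^4$ yields
\[
\frac{\Var{T}}{\Mean{T}^2} \;\le\; \frac{1}{t p^2} + \frac{3\Delta}{t p}.
\]
Under $p \ge \frac{\log n}{\sqrt{t}}$ the first term is at most $1/\log^2 n = o(1)$, and under $p \ge \frac{\Delta \log n}{t}$ the second is at most $3/\log n = o(1)$, so $\Var{T}/\Mean{T}^2 \to 0$ as required.

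The only real obstacle is the covariance bookkeeping in the middle step: correctly recognizing that single-shared-vertex pairs cancel exactly and that edge-sharing pairs are the sole lower-order contribution, together with the counting identity $\sum_e t_e = 3t$ that converts the per-edge triangle multiplicity into the global parameters $\Delta$ and $t$. Everything else is routine.
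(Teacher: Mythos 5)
Your proof is correct and takes essentially the same route as the paper: a second moment/Chebyshev argument arriving at the same variance bound $\Var{T} \le tp^2 + 3\Delta t p^3$ by classifying pairs of triangles according to shared vertices, then deducing $\Var{T} = o(\Mean{T}^2)$ from the hypothesis on $p$. The only differences are presentational---you verify explicitly the exact covariance cancellation for pairs sharing a single vertex (which the paper dismisses as ``easy to check'') and bound the two normalized variance terms directly rather than via the paper's two-case split on $p\Delta$.
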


\begin{proof} 

By Chebyshev's inequality, if $\Var{T}=o(\Mean{T}^2)$ then $T \sim \Mean{T}$ with probability $1-o(1)$ \cite{alonspencer}.
Let $X_i$ be a random variable for the $i$-th triangle, $i=1,\ldots,t$, such that $X_i=1$
if the $i$-th triangle is monochromatic. 
The number of monochromatic triangles $T$ is equal to the sum of these indicator variables, i.e., $T = \sum_{i=1}^t X_i$. By the linearity of expectation and by the fact that $\Prob{X_i=1} = p^2$ we obtain that $\Mean{T}=p^2t$. 
We set $\Delta = \sum_{i \sim j} \Prob{X_i \wedge X_j}$ where the sum is over ordered pairs 
and $i \sim j$ denotes that the corresponding indicator variables are dependent.
It is easy to check that the only case where two indicator variables are dependent
is when they share an edge. In this case the covariance is non-zero and for any $p>0$,
$ \Cov{X_i,X_j} = p^3-p^4 < p^3$.

Hence, we obtain the following upper bound on the variance of $T$, where $\delta_e$ 
is the number of triangles edge $e$ is contained and $\Delta =\max_{e\in E(G)} \delta_e$: 

\begin{align*}
\Var{T} \leq \Mean{T} + \Delta \leq p^2t + p^3 \sum_e \delta_e^2 \leq p^2t + 3p^3 t \Delta 
\end{align*}

We pick $p$ large enough to get $\Var{X}=o(\Mean{X}^2)$. It suffices:

\beql{eqsec}
p^4t^2 >> p^2t + 3 p^3 t \Delta  \rightarrow  p^2 t  >> 1 + 3 p\Delta 
\eeq

We consider two cases:

\noindent
\underline{$\bullet$ {\sc Case 1} ($p\Delta < 1/3$):}\\
It suffices that $p^2t = \omega(n)$ where $\omega(n)$ is some slowly growing function.
We pick $\omega(n)=\log^2{n}$ and hence $p \geq \frac{\log{n}}{\sqrt{t}}$. 

\noindent \\
\underline{$\bullet$ {\sc Case 2} ($p\Delta \ge 1/3$):}\\
It suffices that $\frac{pt}{\Delta} = \log{n}$. 

Combining the above two cases we get that if 

$$p \geq \max{ ( \frac{\Delta\log{n} }{t}, \frac{\log{n}}{\sqrt{t}})}$$

Equation~\ref{eqsec} is satisfied and hence $X \sim \Mean{X}$ with probability $1-o(1)$. 

\end{proof}

\subsection{Concentration via the Hajnal-Szemer\'{e}di Theorem} 
\label{conc}

Here, we present a different approach to obtaining concentration, based on partitioning 
the set of triangles/indicator variables in sets containing many independent random indicator variables and then taking 
a union bound. Our theoretical result is the following theorem:

\begin{theorem} 

Let $t_{\max}$ be the  maximum number of triangles a vertex $v$ is contained in. Also, let $n,t,p,T$ be defined as above, $\epsilon$ a small positive constant and $d > 0$ any constant. If $p^2 \geq \frac{4(d+3) t_{\max}  \log{n}}{\epsilon^2 t} $, then $\Prob{ |T - \Mean{T}| > \epsilon \Mean{T} } \leq \frac{1}{n^d}$.

\label{thrm:concentration} 
\end{theorem}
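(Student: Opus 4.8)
The plan is to convert the sum $T = \sum_{i=1}^{t} X_i$ of dependent indicators (where $X_i = 1$ iff the $i$-th triangle is monochromatic, so $\Mean{X_i} = p^2$ and $\Mean{T} = p^2 t$) into a bounded union of sums of \emph{mutually independent} indicators, to which the Chernoff bound of Lemma~\ref{lem:chernoff} applies directly. First I would form the dependency graph $D$ on the $t$ triangles, joining two triangles by an edge whenever they share a vertex. The crucial point is that a set of triangles that is independent in $D$ consists of pairwise vertex-disjoint triangles; since each $X_i$ is a function only of the colors of its three vertices, the indicators indexed by such a set are functions of disjoint blocks of the independent colors $f(v)$ and are therefore \emph{mutually} (not merely pairwise) independent. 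I would then bound the maximum degree of $D$: each triangle has three vertices, and each vertex lies in at most $t_{\max}$ triangles, so every triangle shares a vertex with at most $k := 3 t_{\max}$ others.

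Next I would invoke the Hajnal--Szemer\'edi Theorem (Theorem~\ref{lem:hajnal}) applied to $D$, whose vertex set has size $t$ and maximum degree at most $k$. This yields a partition of the $t$ triangles into at most $k+1$ color classes $C_1, \dots, C_{k+1}$, each of size roughly $t/k$, such that every class is independent in $D$. For each class I set $T_j = \sum_{i \in C_j} X_i$ and $m_j = |C_j|$; the variables $\{X_i : i \in C_j\}$ are mutually independent $\{0,1\}$ variables of mean $p^2$, so Lemma~\ref{lem:chernoff} (with the lemma's ``$p$'' taken to be $p^2$ and its ``$k$'' taken to be $m_j$) gives
\[
\Prob{ |T_j - p^2 m_j| > \epsilon\, p^2 m_j } \le 2\, e^{-\epsilon^2 p^2 m_j / 2}.
\]

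I would then close the argument with a union bound over the at most $k+1$ classes. If every class satisfies $|T_j - p^2 m_j| \le \epsilon\, p^2 m_j$, then since $\sum_j m_j = t$ and $\sum_j p^2 m_j = \Mean{T}$, the triangle inequality gives $|T - \Mean{T}| = |\sum_j (T_j - p^2 m_j)| \le \epsilon\, p^2 \sum_j m_j = \epsilon \Mean{T}$. Hence
\[
\Prob{ |T - \Mean{T}| > \epsilon \Mean{T} } \le \sum_{j} \Prob{ |T_j - p^2 m_j| > \epsilon p^2 m_j } \le 2(k+1)\, e^{-\epsilon^2 p^2\, m_{\min}/2},
\]
where $m_{\min} \gtrsim t/k$ is the smallest class size. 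Substituting $k \le 3 t_{\max}$ and the class-size bound, the hypothesis $p^2 \ge \frac{4(d+3) t_{\max}\log n}{\epsilon^2 t}$ forces the exponent $\epsilon^2 p^2 m_{\min}/2$ to grow like a constant multiple of $(d+3)\log n$, which dominates the polynomial prefactor $2(k+1) \le \mathrm{poly}(n)$ and yields the bound $n^{-d}$.

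The hard part is the independence bookkeeping rather than any single estimate. Specifically, one must use vertex-disjointness, not merely edge-disjointness, to define $D$: edge-disjoint triangles that share vertices can fail to be mutually independent (three such triangles can force all six of their vertices to share one color, so the joint probability is $p^5$ rather than $p^6$), which is exactly why the hypothesis involves $t_{\max}$ rather than the edge-based quantity $\Delta$ used in the second-moment argument. The remaining delicate point is pinning the explicit constant $4(d+3)$: it is obtained by tracking the degree bound $k \le 3 t_{\max}$, the Hajnal--Szemer\'edi class-size lower bound, and the absorption of the $2(k+1)$ prefactor (which is at most polynomial in $n$) into the exponent.
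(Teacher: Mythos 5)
Your proposal is correct and follows essentially the same route as the paper: the same dependency graph on triangles joined by shared vertices with maximum degree $3t_{\max}$, the same Hajnal--Szemer\'edi equitable partition into independent classes, a per-class Chernoff bound, and a union bound, with the same mild looseness in tracking the constant $4(d+3)$ against the factor-of-$3$ gap between the degree bound and the class size (the paper hides this by setting $k = t/t_{\max}$ and union-bounding over $O(n^3)$ classes). If anything, you are more careful than the paper on two points it glosses over: the explicit triangle-inequality step aggregating per-class concentration into $|T - \Mean{T}| \le \epsilon \Mean{T}$, and the justification that vertex-disjointness (not mere edge-disjointness) is what yields \emph{mutual} independence of the indicators within a class.
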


\begin{proof} 

Let $X_i$ be defined as above, $i=1,\ldots,t$. Construct an auxiliary graph $H$ as follows: add a vertex in $H$ for every triangle in $G$ and connect two vertices representing triangles $t_1$ and $t_2$ if and only if they have a common vertex. The maximum degree of $H$ is  $3t_{\max}=O(\delta^2)$, where $\delta=O(n)$ is the maximum degree in the graph. 
Invoke the Hajnal-Szemer\'{e}di Theorem on $H$: we can partition the vertices of $H$ (triangles of $G$) into sets $S_1, \dots, S_q$ such that $|S_i| > \Omega(\tfrac{t}{t_{\max}})$ and $q=\Theta(t_{\max})$. Let $k=\tfrac{t}{t_{\max}}$. Note that the set of indicator variables $X_i$
corresponding to any set $S_j$ is independent. Applying the Chernoff bound for each set $S_i, i=1,\ldots,q$  we obtain 

$$Pr \left[ |\frac{1}{k} \sum_{i=1}^k X_i - p^2| > \epsilon p^2 \right] \leq 2e^{-\epsilon^2p^2k/2}$$.

If $p^2 k \epsilon^2 \geq 4d'\log{n}$, then $2e^{-\epsilon^2p^2k/2}$ is upper bounded by $n^{-d'}$,
where $d'>0$ is a constant. Since $q=O(n^3)$ by taking a union bound over all sets $S_i$ we see that the triangle count is approximated
within a factor of $\epsilon$ with probability at least $1 - n^{3-d'}$ Setting $d=d'-3$ completes the proof. 
\end{proof}

\subsection{Complexity}
\label{running}

The running time of our procedure of course depends on the subroutine we use on the second step, i.e., to count triangles in the edge set $E'$.
Assuming we use an exact method that examines each vertex independently and counts the number of edges among its neighbors (a.k.a.
Node Iterator method \cite{wagner}) our algorithm runs in $O(n+m+p^2 \sum_{i \in [n]} \text{deg}(i))$ expected time
\footnote{We assume that uniform sampling of a color takes constant time. If not, then we obtain the term $O(n\log{(\frac{1}{p})}$
for the vertex coloring procedure.} by efficiently storing the graph and retrieving the neighbors
of $v$ colored with the same color as $v$ in $O(1+p\,\text{deg}(v))$ expected time. Note that this implies that the speedup with respect to the counting task is $1/p^2$.

\subsection{Discussion}
The use of Hajnal-Szemer\'{e}di Theorem in the context of proving concentration is not new, e.g., \cite{janson,tsourakakis}.
Despite the fact that the second moment argument gave us strong conditions on $p$, 
the use of Hajnal-Szemer\'{e}di has the potential of improving the $\Delta$ factor. 
The condition we provide on $p$ is {\em sufficient} to obtain concentration. Note --see Figure~\ref{fig:fig1}-- that it was necessary to partition the triangles into vertex disjoint  rather than edge disjoint triangles since we need mutually independent variables per chromatic class in order to apply the Chernoff bound. Were we able to remove the dependencies 
in the chromatic classes defined by edge disjoint triangles, probably the overall result could be improved. 
It's worth noting that for $p=1$ we obtain that $t \geq n \omega(n)$, where $\omega(n)$ is any slowly growing function of $n$. 
This is --to the best of our knowledge-- the mildest condition on the triangle density needed for a randomized algorithm to obtain concentration.

Furthermore, the powerful theorem of Kim and Vu \cite{kim-vu,vu} that was used in previous work \cite{tsourakakis_submitted} is not immediately applicable here: let  $Y_e$ be an indicator variable for each edge $e$ such that $Y_e=1$ {\it if and only if} $e$ is monochromatic, i.e., both its endpoints receive the same color. Note that the number of triangles is a boolean polynomial 
$T = \frac{1}{3} \sum_{\Delta(e,f,g)} \big( Y_e Y_f+ Y_f Y_g+Y_e Y_g \big)$ but the boolean variables are not independent
as the Kim-Vu \cite{kim-vu} theorem requires. It's worth noting that the degree of the polynomial
is two. Essentially, this is the reason for which our method obtains better results than existing work \cite{tsourakakis_submitted}
where the degree of the multivariate polynomial is three \cite{993069,tsourakakis_submitted}. 
It's worth noting that previous work \cite{tsourakakis,tsourakakis_submitted} sampled edges independently 
whereas our new method samples subsets of vertices but in a careful manner in order to decrease the degree
of the multivariate polynomial. 
Finally, it's worth noting that using a simple doubling procedure \cite{tsourakakis_submitted} and the median boosting trick of Jerrum, Valiant and Vazirani \cite{jerrum}
we can pick $p$ effectively in practice despite the fact that it depends on the quantity $t$ which we want to estimate by introducing an extra logarithm in the running time.

\begin{figure*} 
\includegraphics[width=0.4\textwidth]{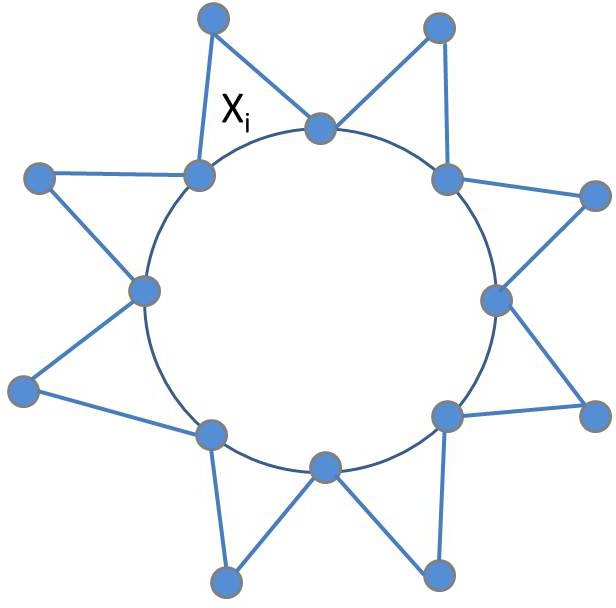}
\caption{Consider the indicator variable $X_i$ corresponding to the $i$-th triangle. Note that $\Prob{X_i|\text{rest are monochromatic}}=p \neq \Prob{X_i}=p^2$.
The indicator variables are {\em pairwise} but not mutually independent. }
\label{fig:fig1} 
\end{figure*}

Finally, from an experimentation point of view, it's interesting to see how well the upper bound $3\Delta t$ matches the sum $\sum_{e \in E(G)} \delta_e^2$
and the typical values for $\Delta$ and $t_{\max}$ in real-world networks. The following table shows these numbers for five networks
\footnote{AS:Autonomous Systems, Oregon: Oregon route views, Enron: Email communication network, ca-HepPh and AstroPh:Collaboration networks. Self-edges were removed.} taken from the SNAP library \cite{snap}.
We see that $\Delta$ and $t_{\max}$ are significantly less than their upperbounds and that typically $3\Delta t$ is significantly larger than 
$\sum_{e \in E(G)} \delta_e^2$ except for the collaboration network of Arxiv Astro Physics. The results are shown in Table~\ref{tab:datasets}.

\begin{table}[ht]
\begin{center}
\begin{tabular}{|l|r|r|r|r|r|r|r|} \hline
Name    & Nodes        & Edges   &  Triangle Count & $\Delta$ & $t_{\max}$& $\sum_{e \in E(G)} \delta_e^2$ & 3$\Delta t$        \\ \hline 
AS       & 7,716        & 12,572  &  6,584          &   344    & 2,047     &        595,632                 & 6,794,688           \\ \hline
Oregon   & 11,492       & 23,409  &  19,894         &   537    & 3,638     &       2,347,560                & 32,049,234         \\ \hline
Enron    & 36,692       & 183,831 &  727,044        &   420    & 17,744    &      75,237,684                & 916,075,440          \\ \hline
ca-HepPh & 12,008       & 118,489 & 3,358,499       &   450    & 39,633    &    1.8839 $\times 10^9$        & 4.534$\times 10^9$  \\ \hline
AstroPh  & 18,772       & 198,050 & 1,351,441       &   350    & 11,269    &      148,765,753               & 1.419$\times 10^9$  \\ \hline
\end{tabular}
\end{center}
\caption{Values for the variables involved in our formulae for five real-world networks.}
\label{tab:datasets}
\end{table}

\section{A \textsc{MapReduce} Implementation} 
\label{mapreduce} 

\textsc{MapReduce} \cite{dean} has become the {\it de facto} standard in academia and industry
for analyzing large scale networks.  Recent work by Suri and Vassilvitskii \cite{suri} proposes two algorithms for counting triangles. 
The first is an efficient \textsc{MapReduce} implementation of the Node Iterator algorithm, see also \cite{wagner}
and the second is based on partitioning the graph into overlapping subsets so that each triangle is present in at
least one of the subsets. 

Our method is amenable to being implemented in \textsc{MapReduce}
and the skeleton of such an implementation is shown in Algorithm 2\footnote{It's worth pointing out for completeness reasons 
that in practice one would not scale the triangles
after the first reduce. It would emit the count of monochromatic triangles
which would be summed up in a second round and scaled by $1/p^2$.}.
We implicitly assume that in a first round vertices have received 
a color uniformly at random from the $N$ available colors and that we have the coloring
information for the endpoints of each edge. 
Each mapper receives an edge together with the colors of its edgepoints. 
If the edge is monochromatic, then it's emitted with the color as the key and the edge as the value. 
Edges with the same color are shipped to the same reducer where locally a triangle counting
algorithm is applied. The total count is scaled appropriately. Trivially, the following lemma  
holds by the linearity of expectation and the fact that the endpoints of any edge 
receive a {\em given} color $c$ with probability $p^2$.

\begin{algorithm}[t]
\caption{\textsc{MapReduce} Colorful Triangle Counting $G(V,E),p=1/N$ } 
\begin{algorithmic}
\STATE {\bf Map:} Input $\langle e=(u,f(u),v,f(v));1 \rangle$
\COMMENT{Let $f$ be a uniformly at random coloring of the vertices with $N$ colors}
\STATE  {\it if} $f(u)=f(v)$ {\it then} emit $\langle f(u); (u,v)\rangle$
\STATE {\bf Reduce:} Input $\langle c; E_c=\{ (u,v) \} \subseteq E \rangle$ 
\COMMENT{ Every edge $(u,v) \in E_c$ has color $c$, i.e., $f(u)=f(v)$}
\STATE Scale each triangle by $\frac{1}{p^2}$. 
\end{algorithmic}
\end{algorithm}

\begin{lemma} 
The expected size to any reduce instance is $O(p^2m)$ and the expected total space used at the end of the map
phase is $O(pm)$. 
\label{mrc} 
\end{lemma}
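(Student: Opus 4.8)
The plan is to treat both assertions as one-line computations via linearity of expectation, built on a single family of indicator variables. For each edge $e = \{u,v\} \in E$ and each color $c \in [N]$, I would let $Z_{e,c}$ denote the indicator of the event that $e$ is monochromatic with color $c$, i.e. $f(u) = f(v) = c$. Since $u$ and $v$ are distinct endpoints of a genuine edge, their colors are assigned independently, so $\Mean{Z_{e,c}} = \Prob{f(u)=c}\,\Prob{f(v)=c} = p^2$. This single identity---the one the surrounding text already isolates---drives everything that follows.

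For the bound on a reduce instance, I would observe that the reducer keyed by color $c$ receives precisely the set $E_c = \Set{ e \in E : f(u) = f(v) = c }$, so that $|E_c| = \sum_{e \in E} Z_{e,c}$. Taking expectations and applying linearity yields $\Mean{|E_c|} = \sum_{e \in E} \Mean{Z_{e,c}} = m p^2$, and by the symmetry of the colors this holds uniformly for every $c$; hence the expected size of any reduce instance is $O(p^2 m)$.

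For the total space at the end of the map phase, I would note that the number of emitted key-value pairs equals the number of monochromatic edges summed over all colors, namely $\sum_{e \in E} \sum_{c=1}^{N} Z_{e,c}$. Linearity again gives expectation $m N p^2 = m p$, using $N = 1/p$, so that the expected total space is $O(pm)$, which finishes the argument.

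I do not anticipate a genuine obstacle: once the indicators $Z_{e,c}$ are in place, each claim is a two-step expectation. The only place demanding a moment's care is the independence of $f(u)$ and $f(v)$, which is exactly what licenses the per-color probability $p^2$; it is valid because $u \neq v$ on any edge, and it is worth stating explicitly so that the two sums above are unambiguous.
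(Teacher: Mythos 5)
Your proof is correct and takes essentially the same route as the paper, which justifies the lemma in one line ``by the linearity of expectation and the fact that the endpoints of any edge receive a \emph{given} color $c$ with probability $p^2$''; your indicators $Z_{e,c}$ simply make that computation explicit, yielding $\Mean{|E_c|}=p^2m$ per reducer and $Np^2m=pm$ emitted pairs overall. Spelling out why $\Prob{f(u)=f(v)=c}=p^2$ (independence of the two endpoint colors, $u\neq v$) is a fair addition but not a departure from the paper's argument.
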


\section{Conclusions}
\label{concl} 

In this note we introduced a new randomized algorithm for approximate triangle counting,
which is implemented easily in parallel. We showed such an implementation in the popular \textsc{MapReduce}
programming framework.  
The key idea which improves the existing work is that by our new sampling method the degree of the multivariate 
polynomial expressing the number of triangles decreases by one, compared to previous work, e.g., \cite{993069,tsourakakis_submitted}.
We used the powerful result of Hajnal-Szemer\'{e}di Theorem to obtain a concentration result
which is unlikely to be the best possible. 
We observe that our result extends any subset of triangles satisfying some predicate (e.g., containing a certain vertex), 
in the sense that counting such triangles in the sample leads to a concentrated estimate of the number in the original graph.

In future work we plan to investigate sampling methods for counting triangles in weighted graphs, other types of subgraphs and
several systems-oriented aspects of our work. 

\section{Acknowledgments} 

The authors are pleased to acknowledge the valuable feedback of Tom Bohman and Alan Frieze.

\end{document}